\newtheorem{thm}{Theorem}
\newtheorem{lem}{Lemma}
\newtheorem{cor}{Corollary}
\newtheorem{con}{Conjecture}
\newtheorem{prop}{Proposition}
\theoremstyle{definition}
\newtheorem{definition}{Definition}
\def \arxiv {1}
\title{Relative Information Loss in the PCA}
\author{\IEEEauthorblockN{Bernhard C. Geiger\IEEEauthorrefmark{1}, Gernot Kubin\IEEEauthorrefmark{1}
\IEEEauthorblockA{\IEEEauthorrefmark{1}Signal Processing and Speech Communication Laboratory, Graz University of Technology, Austria}
$\{$geiger,gernot.kubin$\}$@tugraz.at}}
\begin{document}
\newcounter{myTempCnt}

\ifthenelse{\arxiv=1}{
\newcommand{\x}[1]{x[#1]}
\newcommand{\y}[1]{y[#1]}

\newcommand{\pdfy}{f_Y(y)}

\newcommand{\ent}[1]{H(#1)}
\newcommand{\diffent}[1]{h(#1)}
\newcommand{\derate}[1]{\bar{h}\left(\mathbf{#1}\right)}
\newcommand{\mutinf}[1]{I(#1)}
\newcommand{\ginf}[1]{I_G(#1)}
\newcommand{\kld}[2]{D(#1||#2)}
\newcommand{\kldrate}[2]{\bar{D}(\mathbf{#1}||\mathbf{#2})}
\newcommand{\binent}[1]{H_2(#1)}
\newcommand{\binentneg}[1]{H_2^{-1}\left(#1\right)}
\newcommand{\entrate}[1]{\bar{H}(\mathbf{#1})}
\newcommand{\mutrate}[1]{\mutinf{\mathbf{#1}}}
\newcommand{\redrate}[1]{\bar{R}(\mathbf{#1})}
\newcommand{\pinrate}[1]{\vec{I}(\mathbf{#1})}
\newcommand{\loss}[2][\empty]{\ifthenelse{\equal{#1}{\empty}}{L(#2)}{L_{#1}(#2)}}
\newcommand{\lossrate}[2][\empty]{\ifthenelse{\equal{#1}{\empty}}{L(\mathbf{#2})}{L_{\mathbf{#1}}(\mathbf{#2})}}
\newcommand{\gain}[1]{G(#1)}
\newcommand{\atten}[1]{A(#1)}
\newcommand{\relLoss}[2][\empty]{\ifthenelse{\equal{#1}{\empty}}{l(#2)}{l_{#1}(#2)}}
\newcommand{\relLossrate}[1]{l(\mathbf{#1})}
\newcommand{\relTrans}[1]{t(#1)}
\newcommand{\partEnt}[2]{H^{#1}(#2)}

\newcommand{\dom}[1]{\mathcal{#1}}
\newcommand{\indset}[1]{\mathbb{I}\left({#1}\right)}

\newcommand{\unif}[2]{\mathcal{U}\left(#1,#2\right)}
\newcommand{\chis}[1]{\chi^2\left(#1\right)}
\newcommand{\chir}[1]{\chi\left(#1\right)}
\newcommand{\normdist}[2]{\mathcal{N}\left(#1,#2\right)}
\newcommand{\Prob}[1]{\mathrm{Pr}(#1)}
\newcommand{\Mar}[1]{\mathrm{Mar}(#1)}
\newcommand{\Qfunc}[1]{Q\left(#1\right)}

\newcommand{\expec}[1]{\mathrm{E}\left\{#1\right\}}
\newcommand{\expecwrt}[2]{\mathrm{E}_{#1}\left\{#2\right\}}
\newcommand{\var}[1]{\mathrm{Var}\left\{#1\right\}}
\renewcommand{\det}{\mathrm{det}}
\newcommand{\cov}[1]{\mathrm{Cov}\left\{#1\right\}}
\newcommand{\sgn}[1]{\mathrm{sgn}\left(#1\right)}
\newcommand{\sinc}[1]{\mathrm{sinc}\left(#1\right)}
\newcommand{\e}[1]{\mathrm{e}^{#1}}
\newcommand{\multint}{\iint{\cdots}\int}
\newcommand{\modd}[3]{((#1))_{#2}^{#3}}
\newcommand{\quant}[1]{Q\left(#1\right)}
\newcommand{\card}[1]{\mathrm{card}(#1)}
\newcommand{\diam}[1]{\mathrm{diam}(#1)}

\newcommand{\ivec}{\mathbf{i}}
\newcommand{\hvec}{\mathbf{h}}
\newcommand{\gvec}{\mathbf{g}}
\newcommand{\avec}{\mathbf{a}}
\newcommand{\kvec}{\mathbf{k}}
\newcommand{\fvec}{\mathbf{f}}
\newcommand{\vvec}{\mathbf{v}}
\newcommand{\xvec}{\mathbf{x}}
\newcommand{\Xvec}{\mathbf{X}}
\newcommand{\Xhvec}{\hat{\mathbf{X}}}
\newcommand{\xhvec}{\hat{\mathbf{x}}}
\newcommand{\xtvec}{\tilde{\mathbf{x}}}
\newcommand{\Yvec}{\mathbf{Y}}
\newcommand{\yvec}{\mathbf{y}}
\newcommand{\Zvec}{\mathbf{Z}}
\newcommand{\Svec}{\mathbf{S}}
\newcommand{\Nvec}{\mathbf{N}}
\newcommand{\Pvec}{\mathbf{P}}
\newcommand{\muvec}{\boldsymbol{\mu}}
\newcommand{\wvec}{\mathbf{w}}
\newcommand{\Wvec}{\mathbf{W}}
\newcommand{\Hmat}{\mathbf{H}}
\newcommand{\Amat}{\mathbf{A}}
\newcommand{\Fmat}{\mathbf{F}}

\newcommand{\zerovec}{\mathbf{0}}
\newcommand{\eye}{\mathbf{I}}
\newcommand{\evec}{\mathbf{i}}

\newcommand{\zeroone}{\left[\begin{array}{c}\zerovec^T\\ \eye\end{array} \right]}
\newcommand{\zerooneT}{\left[\begin{array}{cc}\zerovec & \eye\end{array} \right]}
\newcommand{\zerooneM}{\left[\begin{array}{cc}\zerovec &\zerovec^T\\\zerovec& \eye\end{array} \right]}

\newcommand{\Cxx}{\mathbf{C}_{XX}}
\newcommand{\Cx}{\mathbf{C}_{\Xvec}}
\newcommand{\Chx}{\hat{\mathbf{C}}_{\Xvec}}
\newcommand{\Cy}{\mathbf{C}_{\Yvec}}
\newcommand{\Cz}{\mathbf{C}_{\Zvec}}
\newcommand{\Cn}{\mathbf{C}_{\mathbf{N}}}
\newcommand{\Cnt}{\underline{\mathbf{C}}_{\tilde{\mathbf{N}}}}
\newcommand{\Cntm}{\underline{\mathbf{C}}_{\tilde{\mathbf{N}}}}
\newcommand{\Cxh}{\mathbf{C}_{\hat{X}\hat{X}}}
\newcommand{\rxx}{\mathbf{r}_{XX}}
\newcommand{\Cxy}{\mathbf{C}_{XY}}
\newcommand{\Cyy}{\mathbf{C}_{YY}}
\newcommand{\Cnn}{\mathbf{C}_{NN}}
\newcommand{\Cyx}{\mathbf{C}_{YX}}
\newcommand{\Cygx}{\mathbf{C}_{Y|X}}
\newcommand{\Wmat}{\underline{\mathbf{W}}}

\newcommand{\Jac}[2]{\mathcal{J}_{#1}(#2)}

\newcommand{\NN}{{N{\times}N}}
\newcommand{\perr}{P_e}
\newcommand{\perh}{\hat{\perr}}
\newcommand{\pert}{\tilde{\perr}}

\newcommand{\vecind}[1]{#1_0^n}
\newcommand{\roots}[2]{{#1}_{#2}^{(i_{#2})}}
\newcommand{\rootx}[1]{x_{#1}^{(i)}}
\newcommand{\rootn}[2]{x_{#1}^{#2,(i)}}

\newcommand{\markkern}[1]{f_M(#1)}
\newcommand{\pole}{a_1}
\newcommand{\preim}[1]{g^{-1}[#1]}
\newcommand{\preimV}[1]{\mathbf{g}^{-1}[#1]}
\newcommand{\Xmax}{\bar{X}}
\newcommand{\Xmin}{\underbar{X}}
\newcommand{\xmax}{x_{\max}}
\newcommand{\xmin}{x_{\min}}
\newcommand{\limn}{\lim_{n\to\infty}}
\newcommand{\limX}{\lim_{\hat{\Xvec}\to\Xvec}}
\newcommand{\limx}{\lim_{\hat{X}\to X}}
\newcommand{\limXo}{\lim_{\hat{X}_1\to X_1}}
\newcommand{\sumin}{\sum_{i=1}^n}
\newcommand{\finv}{f_\mathrm{inv}}
\newcommand{\ejtheta}{\e{\jmath\theta}}
\newcommand{\khat}{\bar{k}}
\newcommand{\modeq}[1]{g(#1)}
\newcommand{\partit}[1]{\mathcal{P}_{#1}}
\newcommand{\psd}[1]{S_{#1}(\e{\jmath \theta})}
\newcommand{\borel}[1]{\mathfrak{B}(#1)}
\newcommand{\infodim}[1]{d(#1)}

\newcommand{\delay}[2]{\psblock(#1){#2}{\footnotesize$z^{-1}$}}
\newcommand{\Quant}[2]{\psblock(#1){#2}{\footnotesize$\quant{\cdot}$}}
\newcommand{\moddev}[2]{\psblock(#1){#2}{\footnotesize$\modeq{\cdot}$}}}{}

\newcommand{\Ymat}{\underline{\Yvec}}\newcommand{\ymat}{\underline{\yvec}}
\newcommand{\Xmat}{\underline{\Xvec}}
\renewcommand{\Cx}{\underline{\mathbf{C}}_{\Xvec}}
\renewcommand{\Chx}{\underline{\hat{\mathbf{C}}}_{\Xvec}}
\renewcommand{\Cy}{\underline{\mathbf{C}}_{\Yvec}}

\maketitle

\begin{abstract}
In this work we analyze principle component analysis (PCA) as a deterministic input-output system. We show that the relative information loss induced by reducing the dimensionality of the data after performing the PCA is the same as in dimensionality reduction without PCA. Furthermore, we analyze the case where the PCA uses the sample covariance matrix to compute the rotation. If the rotation matrix is not available at the output, we show that an infinite amount of information is lost. The relative information loss is shown to decrease with increasing sample size.
\end{abstract}

\section{Introduction}\label{sec:intro}
Principle component analysis (PCA) is a powerful tool for both linear decorrelation and dimensionality reduction, and is thus widely used in machine learning, neural networks, and feature extraction~\cite{Deco_ITNN,Principe_ITLearning}. A vast literature proves the optimality of PCA in an information theoretic sense for specific cases: Linsker~\cite{Linsker_InfoMax} proved that for \mbox{$N$-dimensional} Gaussian data corrupted by Gaussian noise with diagonal covariance matrix, the PCA maximizes the mutual information between the data and a one-dimensional output random variable. Plumbley argued in~\cite{Plumbley_TN} that for non-Gaussian data the PCA is the linear transform which minimizes an upper bound on the information lost due to dimensionality reduction. Furthermore, the authors of~\cite{Deco_ITNN} show that under some circumstances the PCA minimizes redundancy in the output data, i.e., is an optimal linear independent component analysis. While this list of previous works is clearly not complete, it highlights the utility of information-theoretic measures to characterize the performance of PCA algorithms.

In the present work, we pursue a slightly different approach: We view the PCA as a multivariate, vector-valued input-output system and analyze it in terms of information loss. In this context, \emph{information} means the \emph{total information} available at the input of the system, in contrast to~\cite{Linsker_InfoMax,Plumbley_TN}, which considered only information which is \emph{relevant}. As a consequence, while our analysis may superficially appear to contradict the results available in the literature, it rather provides a different view on PCA as an information-processing system.

In its standard notation, PCA is a linear system
\begin{equation}
 \Yvec = \underline{\mathbf{W}}^T\Xvec
\end{equation}
where $\Xvec$ and $\Yvec$ are the $N$-dimensional input and output vectors, respectively, and $\underline{\mathbf{W}}$ is an orthogonal matrix. In this sense, the PCA is a bijective transform and thus invertible. Typically, however, the output vector will be truncated after performing the PCA in order to reduce the dimensionality of the data. In this projection to a lower-dimensional subspace information is lost, and we intend to quantify the loss in this work.

Let us be more specific: Assuming continuous-valued random variables (RV), the information available at the input is infinite. In particular, assuming joint continuity, each component of the multi-dimensional input RV contains an infinite amount of information; thus, if the dimension is reduced, i.e., if some components are dropped, an infinite amount of information is lost (cf.~\cite{Geiger_ILStatic_IZS}). 

In case the orthogonal matrix is not known a priori but has to be estimated from a set of input data vectors collected in the matrix $\Xmat$, the PCA becomes a nonlinear operation:
\begin{equation}
 \Ymat = \underline{\mathbf{w}}(\Xmat)\Xmat.\label{eq:samplePCA}
\end{equation}
Here, $\underline{\mathbf{w}}$ is a matrix-valued function which computes the orthogonal matrix required for rotating the data (e.g., using the QR algorithm\ifthenelse{\arxiv=1}{~\cite{Golub_MatrixComputations}}{}). If this orthogonal matrix is not stored and made available at the output, one will readily agree that information is lost even if -- at a first glance -- the dimension of the data is not reduced.

Our notion of \emph{relative} information loss captures the ratio of input information which cannot be retrieved at the output. We make this statement precise in Section~\ref{sec:review}. Before analyzing the information loss in PCA in Section~\ref{ssec:PCA}, we present a general theorem about the loss in systems which reduce the dimensionality of the data (Section~\ref{sec:dimredFunct}). Section~\ref{ssec:Toy} acts as a bridge, containing some toy examples which should give an intuitive understanding of relative information loss. In Section~\ref{sec:pcaloss}, the PCA using an input data matrix (cf.~\eqref{eq:samplePCA}) to perform the rotation is shown to destroy information even if the dimensionality is not reduced. Since none of our findings explains the usefulness of the PCA, we finally give an outlook of how to corroborate its optimality using a different notion of information loss in Section~\ref{sec:relevant}. \ifthenelse{\arxiv=1}{We there also discuss possible implications for a system theory from an information-theoretic point-of-view.}{}

\ifthenelse{\arxiv=1}{}{ An extended version of this paper, including some proofs and additional examples, can be found in~\cite{Geiger_RILPCA_arXiv}.}


\section{Relative Information Loss -- A Quick Overview}\label{sec:review}
In this section we will briefly present the basic properties of relative information loss. To this end, we introduce
\begin{definition}[Relative Information Loss]\label{def:relloss}
 Let $\Xvec$ be an $N$-dimensional RV valued in $\dom{X}$, and let $\Yvec$ be obtained by transforming $\Xvec$ with a static function $\gvec$, i.e., $\Yvec=\gvec(\Xvec)$. We define the \emph{relative information loss} induced by this transform as
\begin{equation}
 \relLoss{\Xvec\to\Yvec} = \limn \frac{\ent{\hat{\Xvec}_n|\Yvec}}{\ent{\hat{\Xvec}_n}}
\end{equation}
where $\hat{\Xvec}_n=\frac{\lfloor n\Xvec\rfloor}{n}$ (element-wise). 
The quantity on the left is defined if the limit on the right-hand side exists.
\end{definition}

Loosely speaking, this quantity represents the percentage of input information which is lost in the system, with every bit weighted equally. Taking, for example, an 8 bit number\ifthenelse{\arxiv=1}{\footnote{I.e., $X$ is a scalar RV which can be represented by eight independent Bernoulli-$\frac{1}{2}$ RVs; thus $\ent{X}=8$ bit.}}{}, losing the most significant or the least significant bit both amounts to a relative information loss of $\relLoss{X\to Y}=\frac{1}{8}$. The advantage of this definition is its independence of application-specific aspects (where, e.g., the most significant bit may be more important than the least significant one; cf. Section~\ref{sec:relevant} for a discussion).


The motivation for introducing this quantity is to complement the absolute notion of information loss, given by $\loss{\Xvec\to\Yvec}=\ent{\Xvec|\Yvec}$, which suffers from being infinite in many practically relevant cases (cf.~\cite{Geiger_ILStatic_IZS}).

Due to the non-negativity of entropy and the fact that conditioning reduces entropy, it follows that \mbox{$\relLoss{\Xvec\to\Yvec}\in[0,1]$}. Moreover, for continuous input RVs $\Xvec$, a nonzero relative loss corresponds to an infinite absolute loss:
\begin{prop}\label{prop:infLoss}
 Let $\Xvec$ be such that $\ent{\Xvec}=\infty$ and let\\ $\relLoss{\Xvec\to\Yvec}>0$. Then, $\loss{\Xvec\to\Yvec}=\ent{\Xvec|\Yvec}=\infty$.
\end{prop}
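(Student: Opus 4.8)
The plan is to exploit the multiplicative structure in Definition~\ref{def:relloss}. For every $n$ with $\ent{\hat{\Xvec}_n}>0$ one can trivially write
\begin{equation}
 \ent{\hat{\Xvec}_n|\Yvec} = \frac{\ent{\hat{\Xvec}_n|\Yvec}}{\ent{\hat{\Xvec}_n}}\,\ent{\hat{\Xvec}_n},
\end{equation}
so the quantized conditional entropy is the product of the ratio appearing in the definition of $\relLoss{\Xvec\to\Yvec}$ and the quantized unconditional entropy. Since $\relLoss{\Xvec\to\Yvec}$ is assumed to exist and to be positive, these ratios are well defined, hence $0<\ent{\hat{\Xvec}_n}<\infty$, for all large $n$, and the first factor converges to $\relLoss{\Xvec\to\Yvec}>0$; meanwhile $\ent{\Xvec}=\infty$ means precisely that the second factor $\ent{\hat{\Xvec}_n}$ tends to $+\infty$. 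A sequence that is the product of one converging to a strictly positive limit and one diverging to $+\infty$ itself diverges to $+\infty$; concretely, given $M>0$, choose $n$ large enough that the ratio exceeds $\relLoss{\Xvec\to\Yvec}/2$ and $\ent{\hat{\Xvec}_n}>2M/\relLoss{\Xvec\to\Yvec}$, whence $\ent{\hat{\Xvec}_n|\Yvec}>M$. Thus $\limn \ent{\hat{\Xvec}_n|\Yvec}=\infty$. (Should one worry that these quantities need not be monotone in $n$, it suffices to run the same argument along the dyadic subsequence $n=2^k$, for which the partition induced by $\hat{\Xvec}_{2^{k+1}}$ refines that induced by $\hat{\Xvec}_{2^k}$, so that both $\ent{\hat{\Xvec}_{2^k}}$ and $\ent{\hat{\Xvec}_{2^k}|\Yvec}$ are non-decreasing in $k$.)

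It remains to pass from the quantized conditional entropies back to $\ent{\Xvec|\Yvec}$, for which I would invoke the characterization $\loss{\Xvec\to\Yvec}=\ent{\Xvec|\Yvec}=\limn\ent{\hat{\Xvec}_n|\Yvec}$ from~\cite{Geiger_ILStatic_IZS} --- the continuity of conditional entropy along a refining sequence of partitions generating the Borel $\sigma$-algebra. Combined with the previous paragraph this yields $\loss{\Xvec\to\Yvec}=\ent{\Xvec|\Yvec}=\infty$, as claimed.

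I do not expect a genuine obstacle here: the estimate itself is elementary, being nothing more than ``a strictly positive factor times infinity is infinity.'' The only point that deserves care is the identification $\ent{\Xvec|\Yvec}=\limn\ent{\hat{\Xvec}_n|\Yvec}$, i.e., that the particular quantizer $\hat{\Xvec}_n=\lfloor n\Xvec\rfloor/n$ is asymptotically fine enough to realize the full conditional entropy; since this fact (equivalently, the limit representation of the absolute loss $\loss{\Xvec\to\Yvec}$) is already available from~\cite{Geiger_ILStatic_IZS}, the proposition follows without additional work.
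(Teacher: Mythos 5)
Your proof is correct, and it is essentially the contrapositive of the paper's own argument, built from the same two ingredients: the ratio appearing in Definition~\ref{def:relloss}, and the fact that $\ent{\hat{\Xvec}_n}$ tends to $\ent{\Xvec}=\infty$ at least along a nested subsequence --- the point the paper covers by citing Pinsker, and which your parenthetical dyadic remark handles correctly. (Your unqualified claim that $\ent{\Xvec}=\infty$ ``means precisely'' that the \emph{full} sequence $\ent{\hat{\Xvec}_n}$ diverges would need a small extra argument, since the partitions for general $n$ are not nested; so the dyadic fallback is the part to keep.) The one substantive difference is the final step. The paper argues by contradiction: assuming $\ent{\Xvec|\Yvec}=L<\infty$, it bounds the ratio by $L/\ent{\hat{\Xvec}_n}$ using only the data-processing inequality $\ent{\hat{\Xvec}_n|\Yvec}\leq\ent{\Xvec|\Yvec}$, and lets the denominator blow up, forcing $\relLoss{\Xvec\to\Yvec}=0$. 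You instead import the limit representation $\loss{\Xvec\to\Yvec}=\ent{\Xvec|\Yvec}=\limn\ent{\hat{\Xvec}_n|\Yvec}$ from~\cite{Geiger_ILStatic_IZS}; that is available, but heavier than necessary: since $\hat{\Xvec}_{2^k}$ is a function of $\Xvec$, the same data-processing inequality gives $\ent{\Xvec|\Yvec}\geq\ent{\hat{\Xvec}_{2^k}|\Yvec}$ for every $k$, and the divergence you established along the dyadic subsequence already yields $\ent{\Xvec|\Yvec}=\infty$ with no appeal to the exact limit identity. In short, yours is a correct direct version that makes the source of the infinity transparent, while the paper's contradiction form gets by with a lighter toolkit (only the one-sided inequality plus subsequence convergence).
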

\begin{IEEEproof}
We prove the proposition by contradiction. To this end, assume that $\ent{\Xvec|\Yvec}=L\leq\infty$. Then,
\begin{IEEEeqnarray}{RCL}
 \relLoss{\Xvec\to\Yvec} &=& \limn \frac{\ent{\hat{\Xvec}_n|\Yvec}}{\ent{\hat{\Xvec}_n}} =\limn\inf \frac{\ent{\hat{\Xvec}_n|\Yvec}}{\ent{\hat{\Xvec}_n}}\\
&\leq& \limn\inf\frac{\ent{\Xvec|\Yvec}}{\ent{\hat{\Xvec}_n}}\\ &=& \limn\inf\frac{L}{\ent{\hat{\Xvec}_n}}=0
\end{IEEEeqnarray}
where the inequality is due to data processing. The last equality follows from the fact that at least a subsequence of $\ent{\hat{\Xvec}_n}$ converges to $\ent{\Xvec}$ (cf.~\cite{Pinsker_InfoEngl}).
\end{IEEEproof}

Furthermore, we can show a tight connection to the information dimension introduced by R\'{e}nyi~\cite{Renyi_InfoDim}. We therefore need the following

\begin{definition}[Information Dimension~\cite{Renyi_InfoDim}]\label{def:infodim}
The \emph{information dimension} of an RV $\Xvec$ is given as
\begin{equation}
 \infodim{\Xvec} = \limn \frac{\ent{\hat{\Xvec}_n}}{\log n}
\end{equation}
provided the limit on the right exists.
\end{definition}

R\'{e}nyi then showed that the asymptotic behavior of $\ent{\hat{\Xvec}_n}$ depends strongly on the information dimension of $\Xvec$:

\begin{lem}[Asymptotic behavior of $\ent{\hat{\Xvec}_n}$]\label{lem:epsilon}
Let $\Xvec$ be an RV with existing information dimension $\infodim{\Xvec}$ and let $\ent{\hat{\Xvec}_1}<\infty$. Then, for $n\to\infty$ the entropy of $\hat{\Xvec}_n$ behaves as
\begin{equation}
  \ent{\hat{\Xvec}_n} = \infodim{\Xvec}\log n + h + o(1)
\end{equation}
where $h$ is the $\infodim{\Xvec}$-dimensional entropy of $\Xvec$ (provided it exists).
\end{lem}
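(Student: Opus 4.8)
\medskip

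The plan is to reduce the $N$-dimensional statement to the one-dimensional case that Rényi established, and then combine the coordinates. First I would recall Rényi's original result: for a scalar RV $X$ with $\infodim{X}=d$ and $\ent{\hat{X}_1}<\infty$, one has $\ent{\hat{X}_n}=d\log n + h + o(1)$, where $h$ is the $d$-dimensional entropy, i.e.\ the limit of $\ent{\hat{X}_n}-d\log n$. So for $N=1$ there is nothing to prove beyond citing~\cite{Renyi_InfoDim}. The task is therefore to handle the vector case, where $\hat{\Xvec}_n$ is the componentwise quantization $(\hat{X}_{1,n},\dots,\hat{X}_{N,n})$.

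\medskip

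The key step is the chain rule for entropy:
\begin{equation}
  \ent{\hat{\Xvec}_n} = \sum_{k=1}^N \ent{\hat{X}_{k,n}\,\big|\,\hat{X}_{1,n},\dots,\hat{X}_{k-1,n}}.
\end{equation}
Each conditional term can be bounded above by the unconditioned entropy $\ent{\hat{X}_{k,n}}$, which by the scalar result grows like $d_k\log n + O(1)$, where $d_k$ is the information dimension of the $k$th marginal. This gives the upper bound $\ent{\hat{\Xvec}_n}\le(\sum_k d_k)\log n+O(1)$. For the matching lower bound I would instead view the joint quantity directly: $\ent{\hat{\Xvec}_n}$ is the entropy of a discrete RV supported on a grid of spacing $1/n$, and by Definition~\ref{def:infodim} it satisfies $\ent{\hat{\Xvec}_n}/\log n\to\infodim{\Xvec}$ by hypothesis. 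Hence $\ent{\hat{\Xvec}_n}=\infodim{\Xvec}\log n + r_n$ where $r_n=o(\log n)$; the real content of the lemma is the stronger claim that $r_n$ in fact converges to a finite constant $h$ (the $\infodim{\Xvec}$-dimensional entropy), i.e.\ $r_n = h + o(1)$.

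\medskip

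To get that sharper remainder estimate I would mimic Rényi's scalar argument in $\mathbb{R}^N$: write $\ent{\hat{\Xvec}_{2n}}-\ent{\hat{\Xvec}_n}$ as a sum of conditional entropies of the ``refinement bits'' obtained when halving the grid spacing, show this difference is bounded, and then use a subadditivity/monotonicity argument on the sequence $\ent{\hat{\Xvec}_n}-\infodim{\Xvec}\log n$ to conclude it is Cauchy, hence convergent; the limit is by definition the $\infodim{\Xvec}$-dimensional entropy $h$. The finiteness of $\ent{\hat{\Xvec}_1}$ is used here exactly as in Rényi's proof to anchor the sequence and keep all the increments finite. The main obstacle is this last step: unlike the marginal bounds, controlling the remainder $r_n$ to the constant level requires genuinely reproducing Rényi's limiting argument in the multivariate setting rather than quoting it as a black box — the chain rule alone only yields $O(1)$ bounds on $r_n$, not convergence. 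If one is content to simply invoke the multivariate generalization of Rényi's theorem (which holds verbatim, since his proof never uses $N=1$ in an essential way), the lemma follows immediately; otherwise the halving/Cauchy argument above fills the gap.
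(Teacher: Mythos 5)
The paper's own ``proof'' of this lemma is a one-line citation: the statement is Rényi's, and in fact, given the parenthetical ``provided it exists,'' it is essentially definitional. Rényi \emph{defines} the $\infodim{\Xvec}$-dimensional entropy as $h=\lim_{n\to\infty}\bigl(\ent{\hat{\Xvec}_n}-\infodim{\Xvec}\log n\bigr)$ whenever this limit exists; assuming its existence, the expansion $\ent{\hat{\Xvec}_n}=\infodim{\Xvec}\log n+h+o(1)$ is nothing but the definition of that limit, and Rényi's paper already covers the vector-valued case (this is what Theorems 1 and 4 cited later in the paper rely on). So your closing fallback --- ``simply invoke the multivariate version of Rényi's theorem'' --- is exactly what the paper does, and for that route your proposal is fine, if more roundabout than necessary.

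The self-contained argument you sketch instead has a genuine gap, and it also aims at the wrong target. First, the chain-rule step does not sharpen anything: the upper bound $\ent{\hat{\Xvec}_n}\le\bigl(\sum_k d_k\bigr)\log n+O(1)$ involves the sum of marginal dimensions, which can strictly exceed $\infodim{\Xvec}$ (take $X_2=X_1$), so it cannot identify the leading coefficient, let alone the constant term. Second, the real content you assign to yourself --- showing that $r_n=\ent{\hat{\Xvec}_n}-\infodim{\Xvec}\log n$ converges --- is precisely what the lemma \emph{assumes} via ``provided it exists,'' and it is not provable from the stated hypotheses: existence of the information dimension together with $\ent{\hat{\Xvec}_1}<\infty$ does not force $r_n$ to converge (this is why Rényi proves convergence only under additional structure, e.g.\ absolute continuity with finite differential entropy, or discreteness with finite entropy). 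Consequently the proposed ``halving/Cauchy via subadditivity'' step cannot be completed in the stated generality; a bounded-increment argument on the dyadic subsequence would at best give $r_n=O(1)$, not convergence, and it says nothing about non-dyadic $n$. If you drop that attempt and read $h$ as Rényi's $d$-dimensional entropy by definition, the lemma reduces to the citation the paper gives.
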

\begin{proof}
 See~\cite{Renyi_InfoDim} (cf. also~\cite{Kolmogorov_ContinuousRVs,Kolmogorov_EpsilonEntropy}).
\end{proof}

In particular, if $\Xvec$ has a probability measure absolutely continuous w.r.t. the $N$-dimensional Lebesgue measure $\mu^N$ ($P_\Xvec\ll\mu^N$), then the information dimension $\infodim{\Xvec}=N$ and $h$ denotes the differential entropy of $\Xvec$ (see~Theorems~1 \& 4 in~\cite{Renyi_InfoDim}). We are now ready to make the connection between relative information loss and information dimension in

\begin{thm}\label{thm:RILDim}
 Let $\Xvec$ be an RV with positive (finite) information dimension $\infodim{\Xvec}$. Then, if $\infodim{\Xvec|\Yvec=\yvec}$ exists and is finite $P_\Yvec$-a.s., the relative information loss equals
\begin{equation}
 \relLoss{\Xvec\to\Yvec} = \frac{\infodim{\Xvec|\Yvec}}{\infodim{\Xvec}}
\end{equation}
where $\infodim{\Xvec|\Yvec}=\int_{\dom{Y}}\infodim{\Xvec|\Yvec=\yvec}dP_\Yvec(\yvec)$.
\end{thm}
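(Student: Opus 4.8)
The plan is to divide the numerator and denominator in Definition~\ref{def:relloss} by $\log n$ and to compare leading-order growth. The denominator is immediate: by Definition~\ref{def:infodim} we have $\ent{\hat{\Xvec}_n}/\log n\to\infodim{\Xvec}$, which by hypothesis is positive and finite. It therefore suffices to show $\ent{\hat{\Xvec}_n|\Yvec}/\log n\to\infodim{\Xvec|\Yvec}$, after which
\[
\relLoss{\Xvec\to\Yvec}=\limn\frac{\ent{\hat{\Xvec}_n|\Yvec}/\log n}{\ent{\hat{\Xvec}_n}/\log n}=\frac{\infodim{\Xvec|\Yvec}}{\infodim{\Xvec}},
\]
where positivity of $\infodim{\Xvec}$ is exactly what makes the division legitimate.

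For the numerator I would exploit that $\hat{\Xvec}_n$ is discrete, so $\ent{\hat{\Xvec}_n|\Yvec}=\int_{\dom{Y}}\ent{\hat{\Xvec}_n|\Yvec=\yvec}\,dP_\Yvec(\yvec)$ and hence $\ent{\hat{\Xvec}_n|\Yvec}/\log n=\int_{\dom{Y}}\bigl(\ent{\hat{\Xvec}_n|\Yvec=\yvec}/\log n\bigr)\,dP_\Yvec(\yvec)$. For $P_\Yvec$-almost every $\yvec$, Definition~\ref{def:infodim} applied to the conditional RV $\Xvec|\Yvec=\yvec$ shows that the integrand converges to $\infodim{\Xvec|\Yvec=\yvec}$, which exists and is finite a.s.\ by assumption; the target value is then $\int_{\dom{Y}}\infodim{\Xvec|\Yvec=\yvec}\,dP_\Yvec(\yvec)=\infodim{\Xvec|\Yvec}$. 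Thus the whole statement reduces to interchanging $\limn$ with this integral, which I expect to be the main obstacle.

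I would settle the interchange by dominated convergence, with a refinement bound providing the dominating function. Componentwise one verifies $\lfloor\hat{\Xvec}_n\rfloor=\hat{\Xvec}_1$, so $\hat{\Xvec}_1$ is a deterministic function of $\hat{\Xvec}_n$; moreover, conditioned on $\hat{\Xvec}_1=\kvec\in\mathbb{Z}^N$ the vector $\hat{\Xvec}_n$ takes at most $n^N$ distinct values. By the chain rule this gives $\ent{\hat{\Xvec}_n|\Yvec=\yvec}\le\ent{\hat{\Xvec}_1|\Yvec=\yvec}+N\log n$ for every $\yvec$, so for $n\ge2$
\[
\frac{\ent{\hat{\Xvec}_n|\Yvec=\yvec}}{\log n}\le\frac{\ent{\hat{\Xvec}_1|\Yvec=\yvec}}{\log 2}+N=:G(\yvec).
\]
It remains to check $G\in L^1(P_\Yvec)$, i.e.\ $\ent{\hat{\Xvec}_1|\Yvec}<\infty$: this follows from $\ent{\hat{\Xvec}_1|\Yvec}\le\ent{\hat{\Xvec}_1}$, and $\ent{\hat{\Xvec}_1}<\infty$ is forced by $\infodim{\Xvec}<\infty$ (the same refinement bound yields $\ent{\hat{\Xvec}_n}\ge\ent{\hat{\Xvec}_1}$ for all $n$, so $\ent{\hat{\Xvec}_1}=\infty$ would give $\infodim{\Xvec}=\infty$). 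Dominated convergence then yields $\ent{\hat{\Xvec}_n|\Yvec}/\log n\to\infodim{\Xvec|\Yvec}$, completing the proof. An alternative is to invoke Lemma~\ref{lem:epsilon} conditionally to absorb the lower-order terms, but that route additionally requires the $\infodim{\Xvec|\Yvec=\yvec}$-dimensional entropies to exist, which the refinement-bound argument avoids.
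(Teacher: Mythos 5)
Your proposal is correct and follows essentially the same route as the paper's proof: divide numerator and denominator by $\log n$, identify the denominator limit as $\infodim{\Xvec}$ (positive and finite, so the quotient of limits is legitimate), write the numerator as an integral over $\dom{Y}$ of $\ent{\hat{\Xvec}_n|\Yvec=\yvec}/\log n$, and exchange limit and integral by dominated convergence. The only difference is that where the paper justifies domination by appealing to the bound of information dimension by the topological dimension (i.e.\ by $N$), you construct the explicit dominating function $\ent{\hat{\Xvec}_1|\Yvec=\yvec}/\log 2+N$ via the refinement bound $\ent{\hat{\Xvec}_n|\Yvec=\yvec}\le\ent{\hat{\Xvec}_1|\Yvec=\yvec}+N\log n$ and verify its integrability from $\infodim{\Xvec}<\infty$, which is a more careful filling-in of the same DCT step.
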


\begin{IEEEproof}
We start with Definition~\ref{def:relloss} and obtain
\begin{IEEEeqnarray}{RCL}
\relLoss{\Xvec\to\Yvec} &=& \limn \frac{\ent{\hat{\Xvec}_n|\Yvec}}{\ent{\hat{\Xvec}_n}}\\
&=& \limn \frac{\int_\dom{Y} \ent{\hat{\Xvec}_n|\Yvec=\yvec}dP_\Yvec  }{\ent{\hat{\Xvec}_n}}\\
&=& \limn \frac{\int_\dom{Y} \frac{\ent{\hat{\Xvec}_n|\Yvec=\yvec}}{\log n}dP_\Yvec  }{\frac{\ent{\hat{\Xvec}_n}}{\log n}}
\end{IEEEeqnarray}
where we divided both the numerator and the denominator by $\log n$. Assuming now the limits of the numerator and denominator exist and are finite we can continue with
\begin{IEEEeqnarray}{RCL}
 \relLoss{\Xvec\to\Yvec} &=& \frac{\limn\int_\dom{Y} \frac{\ent{\hat{\Xvec}_n|\Yvec=\yvec}}{\log n}dP_\Yvec  }{\limn\frac{\ent{\hat{\Xvec}_n}}{\log n}}\\
 &=& \frac{\limn\int_\dom{Y} \frac{\ent{\hat{\Xvec}_n|\Yvec=\yvec}}{\log n}dP_\Yvec  }{\infodim{\Xvec}}
\end{IEEEeqnarray}
where we employed Definition~\ref{def:infodim}. By assumption the limit
\begin{equation}
 \infodim{\Xvec|\Yvec=\yvec} = \limn \frac{\ent{\hat{\Xvec}_n|\Yvec=\yvec}}{\log n}
\end{equation}
exists $P_\Yvec$-a.s. Since the information dimension of an RV is upper bounded by the topological dimension of its support (e.g., by the number of coordinates of the vector $\Xvec$), one can apply Lebesgue's dominated convergence theorem (e.g.,~\cite{Rudin_Analysis3}) to exchange the order of the limit and the integral. This completes the proof.
%
\end{IEEEproof}

Before proceeding, we want to mention that the converse of Proposition~\ref{prop:infLoss} is not true\footnote{We thank an anonymous reviewer for pointing us to this fact.}, i.e., that if $\ent{\Xvec}=\infty$ and $\relLoss{\Xvec\to\Yvec}=0$, it does not necessarily follow that $\loss{\Xvec\to\Yvec}<\infty$. To this end, assume that the function $\gvec$ is such that, for all $\yvec$, $\Xvec|\Yvec=\yvec$ is a discrete RV with infinite entropy but with $\ent{\hat{\Xvec}_1|\Yvec=\yvec}<\infty$. As a consequence, the information dimension $\infodim{\Xvec|\Yvec=\yvec}=0$, which establishes \mbox{$\relLoss{\Xvec\to\Yvec}=0$}. However, 
\begin{equation}
 \loss{\Xvec\to\Yvec}=\int_\dom{Y} \ent{\Xvec|\Yvec=\yvec}dP_\Yvec(\yvec)=\infty.
\end{equation}
We give an example for such a case in the Appendix.

\section{Relative Information Loss for Functions which reduce Dimensionality}\label{sec:dimredFunct}
We now proceed to analyzing the relative information loss for measurable functions $\gvec{:}\ \dom{X}\to\dom{Y}$ which map subsets of $\dom{X}\subseteq\mathbb{R}^N$ to sufficiently well-behaved submanifolds of $\mathbb{R}^N$. In particular, let $\{\dom{X}_i\}$, $i=1,\dots,L$, denote a finite partition of $\dom{X}$ and let $P_\Xvec\ll\mu^N$, where $\mu^N$ is the $N$-dimensional Lebesgue measure. Here and throughout the remainder of this paper we assume that all involved information dimensions exist and are finite. The latter restriction is fulfilled under the mild condition that $\ent{\hat{\Xvec}_n}<\infty$~\cite{Renyi_InfoDim}, which for scalar (one-dimensional) RVs $X$ translates to $\expec{|X|^\epsilon}<\infty$ for some $\epsilon>0$~\cite{Wu_Renyi}.

We maintain

\begin{thm}\label{thm:functDimRed}
  Let $\Xvec$ be such that $P_\Xvec\ll\mu^N$ is supported on $\dom{X}\subseteq\mathbb{R}^N$ and let $\{\dom{X}_i\}$ be a partition of $\dom{X}$ such that each of its $L$ elements is a smooth $N$-dimensional manifold. Let $\gvec$ be such that $\gvec_i=\gvec|_{\dom{X}_i}$ are submersions to disjoint smooth $n_i$-dimensional manifolds $\dom{Y}_i$. Then, the relative information loss is
\begin{equation}
 \relLoss{\Xvec\to\Yvec} = \sum_{i=1}^L P_\Xvec(\dom{X}_i)\frac{N-n_i}{N}.
\end{equation}
\end{thm}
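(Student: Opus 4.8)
The plan is to invoke Theorem~\ref{thm:RILDim}, which reduces the computation of $\relLoss{\Xvec\to\Yvec}$ to that of the conditional information dimension $\infodim{\Xvec|\Yvec}$. Since $P_\Xvec\ll\mu^N$, R\'enyi's results (Theorems~1 \& 4 in~\cite{Renyi_InfoDim}) give $\infodim{\Xvec}=N$, so it remains to show that $\infodim{\Xvec|\Yvec}=\sum_{i=1}^L P_\Xvec(\dom{X}_i)(N-n_i)$ and to divide by $N$. Because the target manifolds $\dom{Y}_i$ are disjoint and $\gvec_j$ maps $\dom{X}_j$ into $\dom{Y}_j$, we have $\gvec^{-1}[\dom{Y}_i]=\dom{X}_i$, hence $P_\Yvec(\dom{Y}_i)=P_\Xvec(\dom{X}_i)$ and
\begin{equation}
\infodim{\Xvec|\Yvec}=\int_{\dom{Y}}\infodim{\Xvec|\Yvec=\yvec}\,dP_\Yvec(\yvec)=\sum_{i=1}^L\int_{\dom{Y}_i}\infodim{\Xvec|\Yvec=\yvec}\,dP_\Yvec(\yvec).
\end{equation}
So the whole argument comes down to identifying $\infodim{\Xvec|\Yvec=\yvec}$ for $P_\Yvec$-almost every $\yvec\in\dom{Y}_i$.

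Next I would fix $\yvec\in\dom{Y}_i$. The conditional RV $\Xvec|\Yvec=\yvec$ is supported on the fiber $\gvec_i^{-1}[\{\yvec\}]$, and since $\gvec_i$ is a submersion from the $N$-manifold $\dom{X}_i$ onto the $n_i$-manifold $\dom{Y}_i$, the regular-value (preimage) theorem guarantees that this fiber is a smooth embedded submanifold of $\dom{X}_i\subseteq\mathbb{R}^N$ of dimension $N-n_i$. What is needed then is that the conditional measure $P_{\Xvec|\Yvec=\yvec}$ is absolutely continuous with respect to the $(N-n_i)$-dimensional Hausdorff (surface) measure on this fiber. I would obtain this by disintegrating $P_\Xvec$ along $\gvec_i$: since $P_\Xvec\ll\mu^N$ with some density $f_\Xvec$, the coarea formula yields that, for $P_\Yvec$-a.e.\ $\yvec$, the disintegrated measure on the fiber has a density proportional to $f_\Xvec$ divided by the normal Jacobian of $\gvec_i$, which is finite and strictly positive on the fiber. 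Consequently the conditional distribution is absolutely continuous w.r.t.\ surface measure on an $(N-n_i)$-dimensional manifold; covering the manifold with finitely many charts in which it is the graph of a smooth function — so that the projection onto $N-n_i$ coordinates is bi-Lipschitz on compacta and hence information-dimension-preserving — together with R\'enyi's theorem for measures absolutely continuous w.r.t.\ Lebesgue measure on $\mathbb{R}^{N-n_i}$, gives $\infodim{\Xvec|\Yvec=\yvec}=N-n_i$.

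Plugging this back, $\int_{\dom{Y}_i}\infodim{\Xvec|\Yvec=\yvec}\,dP_\Yvec(\yvec)=(N-n_i)P_\Yvec(\dom{Y}_i)=(N-n_i)P_\Xvec(\dom{X}_i)$, so $\infodim{\Xvec|\Yvec}=\sum_{i=1}^L P_\Xvec(\dom{X}_i)(N-n_i)$, and dividing by $\infodim{\Xvec}=N$ yields the claimed formula.

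I expect the main obstacle to be this measure-theoretic step: rigorously justifying that the conditional distributions on the fibers are absolutely continuous w.r.t.\ the appropriate Hausdorff measure — so that their information dimension is \emph{exactly} $N-n_i$, not merely at most $N-n_i$ — and that this holds for $P_\Yvec$-almost every $\yvec$. This needs a careful disintegration/coarea argument, control of the partition-boundary set $\dom{X}\setminus\bigcup_i \dom{X}_i^\circ$ (which is $P_\Xvec$-null under the smoothness hypotheses since it is lower-dimensional), and the invariance of R\'enyi's information dimension under the smooth local parametrizations of the fibers. Everything else — splitting the integral over the partition, the fiber-dimension count from the submersion theorem, and the final arithmetic — is routine once this ingredient is in place.
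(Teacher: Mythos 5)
Your proposal follows essentially the same route as the paper's proof: invoke Theorem~\ref{thm:RILDim} with $\infodim{\Xvec}=N$, use the submersion (regular-value) theorem together with disjointness of the $\dom{Y}_i$ to identify each fiber as an $(N-n_i)$-dimensional submanifold so that $\infodim{\Xvec|\Yvec=\yvec}=N-n_i$, and then integrate and use $P_\Yvec(\dom{Y}_i)=P_\Xvec(\dom{X}_i)$. The only difference is that you spell out, via disintegration and the coarea formula, why the conditional laws are absolutely continuous w.r.t.\ surface measure on the fibers --- a step the paper simply asserts --- so your write-up is, if anything, more complete than the original.
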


\begin{IEEEproof}
By the submersion theorem (e.g.,~\cite{Lee_SmoothManifolds}) and the fact that the images of $\dom{X}_i$ are disjoint, the preimage of every point in $\dom{Y}_i$ is a closed submanifold of $\dom{X}_i$ with codimension $n_i$. We can thus use Theorem~\ref{thm:RILDim} together with $\infodim{\Xvec}=N$ to get
\begin{multline}
 \relLoss{\Xvec\to\Yvec} = \frac{1}{N} \sum_{i=1}^L \int_{\dom{Y}_i} \infodim{\Xvec|\Yvec=\yvec}dP_\Yvec(\yvec)\\
= \frac{1}{N} \sum_{i=1}^L \int_{\dom{Y}_i} (N-n_i) dP_\Yvec(\yvec)= \sum_{i=1}^L \frac{N-n_i}{N}P_\Yvec(\dom{Y}_i)
\end{multline}
which completes the proof by noticing that $P_\Yvec(\dom{Y}_i)=P_\Xvec(\dom{X}_i)$.
%
%
\end{IEEEproof}

Before proceeding, two interesting facts are worth mentioning: First of all, as a submersion is a smooth mapping between smooth manifolds, we can state
\begin{cor}\label{cor:dim}
If $P_\Xvec\ll\mu^N$ and if $\gvec$ is as in Theorem~\ref{thm:functDimRed}, then
 \begin{equation}
 \relLoss{\Xvec\to\Yvec} = 1-\frac{\infodim{\Yvec}}{\infodim{\Xvec}}
\end{equation}
provided the information dimension of $\Yvec$ exists.
\end{cor}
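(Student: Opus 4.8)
The plan is to combine Theorem~\ref{thm:functDimRed} with an explicit evaluation of $\infodim{\Yvec}$. Starting from the formula in Theorem~\ref{thm:functDimRed} and using $P_\Xvec(\dom{X}_i)=P_\Yvec(\dom{Y}_i)$ (established in its proof), that $\{\dom{Y}_i\}$ partitions the support of $\Yvec$, and that $\infodim{\Xvec}=N$ because $P_\Xvec\ll\mu^N$, one rewrites
\[
\relLoss{\Xvec\to\Yvec} = \sum_{i=1}^L P_\Yvec(\dom{Y}_i)\frac{N-n_i}{N} = 1 - \frac{1}{\infodim{\Xvec}}\sum_{i=1}^L n_i\, P_\Yvec(\dom{Y}_i),
\]
so the corollary reduces to the claim that $\infodim{\Yvec} = \sum_{i=1}^L n_i\, P_\Yvec(\dom{Y}_i)$ whenever the left-hand side exists.

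To establish that claim I would proceed in two steps. First, I would argue that the conditional law of $\Yvec$ given $\Yvec\in\dom{Y}_i$ is absolutely continuous with respect to the $n_i$-dimensional volume (Hausdorff) measure on the smooth manifold $\dom{Y}_i$; this follows from the same submersion/coarea argument used in the proof of Theorem~\ref{thm:functDimRed}, since $\gvec_i$ is a submersion and $P_\Xvec$ restricted to $\dom{X}_i$ is absolutely continuous w.r.t.\ $\mu^N$. Working in local charts, $\dom{Y}_i$ is bi-Lipschitz equivalent to open subsets of $\mathbb{R}^{n_i}$, and since the information dimension is invariant under bi-Lipschitz maps, Theorems~1 \& 4 of~\cite{Renyi_InfoDim} yield that this conditional RV has information dimension $n_i$.

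Second, I would use the behaviour of the information dimension under a finite mixture with pairwise disjoint supports. Let $J$ be the discrete index with $\Yvec\in\dom{Y}_J$. Then $\ent{\hat{\Yvec}_n} = \ent{J} + \sum_i P_\Yvec(\dom{Y}_i)\,\ent{\hat{\Yvec}_n|J=i} - \ent{J|\hat{\Yvec}_n}$, and because the $\dom{Y}_i$ are disjoint, the $P_\Yvec$-mass carried by quantization cells meeting more than one $\dom{Y}_i$ tends to zero as $n\to\infty$, whence $\ent{J|\hat{\Yvec}_n}\to 0$. Dividing by $\log n$, letting $n\to\infty$, and using that $\ent{J}$ is a finite constant together with the first step gives $\infodim{\Yvec} = \sum_i P_\Yvec(\dom{Y}_i)\, n_i$. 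Substituting this into the display above completes the proof.

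The main obstacle is this last step, namely making precise that the dyadic-type quantizer asymptotically separates the disjoint pieces $\dom{Y}_i$, i.e.\ that $\ent{J|\hat{\Yvec}_n}\to 0$; the boundary-cell estimate sketched above suffices, and it is eased by the fact that the corollary already \emph{assumes} $\infodim{\Yvec}$ exists, so only its value, not its existence, is at stake. An alternative route is to apply Theorem~\ref{thm:RILDim} together with a chain rule $\infodim{\Xvec}=\infodim{\Yvec}+\infodim{\Xvec|\Yvec}$ (valid here since $(\Xvec,\gvec(\Xvec))$ is bi-Lipschitz-equivalent to $\Xvec$ on each $\dom{X}_i$), but this merely relocates the same technical point into the justification of the chain rule.
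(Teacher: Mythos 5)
Your proposal is correct and follows essentially the same route as the paper's proof: rewrite the formula of Theorem~\ref{thm:functDimRed} as $1-\frac{1}{N}\sum_{i} n_i P_\Yvec(\dom{Y}_i)$, show that the conditional law of $\Yvec$ on each $\dom{Y}_i$ is absolutely continuous with respect to the $n_i$-dimensional measure on that manifold (the paper does this via the fact that preimages of null sets under the submersion $\gvec_i$ are $\mu^N$-null sets, citing Ponomarev -- the same point as your coarea/chart argument), conclude $\infodim{\Yvec|\Xvec\in\dom{X}_i}=n_i$ by R\'{e}nyi, and then average over the pieces. The only genuine difference is the averaging step: the paper simply cites the mixture formula $\infodim{\Yvec}=\sum_i \infodim{\Yvec|\Xvec\in\dom{X}_i}P_\Xvec(\dom{X}_i)$ from \cite[Theorem~2]{Wu_Renyi} or \cite{Smieja_EntropyMixture}, whereas you re-derive it from the identity $\ent{\hat{\Yvec}_n}=\ent{J}+\sum_i P_\Yvec(\dom{Y}_i)\ent{\hat{\Yvec}_n|J=i}-\ent{J|\hat{\Yvec}_n}$. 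That elementary derivation is valid, and the step you flag as the ``main obstacle'' -- proving $\ent{J|\hat{\Yvec}_n}\to 0$ via boundary-cell estimates -- is actually unnecessary: since $J$ takes at most $L$ values, $0\leq\ent{J|\hat{\Yvec}_n}\leq\log L$, so this term is killed by the division by $\log n$ whether or not it tends to zero. Thus your self-contained version buys independence from the cited mixture-dimension results at essentially no extra cost, while the paper's citation keeps the proof shorter; both hinge on the same two facts (conditional dimension $n_i$ on each piece, and linearity of information dimension over a finite mixture with disjoint supports).
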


\begin{IEEEproof}
 We note from the proof of Theorem~\ref{thm:functDimRed} that
\begin{equation}
 \relLoss{\Xvec\to\Yvec} = 1- \sum_{i=1}^L \frac{n_i}{N} P_\Yvec(\dom{Y}_i)
\end{equation}
where $N=\infodim{\Xvec}$. We now show that $\infodim{\Yvec|\Xvec\in\dom{X}_i}=n_i$. By the fact that $\gvec|_{\dom{X}_i}\in\mathcal{C}^\infty$ is a submersion, the preimages of $\mu^{n_i}$-null sets are $\mu^N$-null sets themselves~\cite{Ponomarev_Preimage}. Consequently, since $P_\Xvec\ll\mu^N$, the conditional probability measure supported on $\dom{Y}_i$ is absolutely continuous w.r.t. the $n_i$-dimensional Lebesgue measure ($P_{\Yvec|\Xvec\in\dom{X}_i}\ll\mu^{n_i}$). Since $\dom{Y}_i$ is a smooth $n_i$-dimensional manifold, we again obtain with R\'{e}nyi~\cite{Renyi_InfoDim} that $\infodim{\Yvec|\Xvec\in\dom{X}_i}=n_i$. The proof is complete with~\cite[Theorem~2]{Wu_Renyi} or~\cite{Smieja_EntropyMixture}, noting that
\begin{equation}
 \infodim{\Yvec} = \sum_{i=1}^L \infodim{\Yvec|\Xvec\in\dom{X}_i} P_\Xvec(\dom{X}_i)=\sum_{i=1}^L n_i P_\Yvec(\dom{Y}_i).\label{eq:avgDim}
\end{equation}
\end{IEEEproof}

\ifthenelse{\arxiv=1}{In fact, already R\'{e}nyi showed~\eqref{eq:avgDim} for a mixture of a one-dimensional continuous RV and a discrete RV in~\cite{Renyi_InfoDim}.}{} For more complicated measures (e.g., measures with a non-integer information dimension) or more general functions it might be hard to prove Theorem~\ref{thm:functDimRed} and Corollary~\ref{cor:dim}. We conjecture, however, that the conditions imposed in Theorem~\ref{thm:functDimRed} can be loosened such that at least for a larger class of functions (e.g., those for which the images of $\dom{X}_i$ are not disjoint) the relative information loss can be evaluated.

Secondly, we believe that one should be able to analyze cascaded systems, at least in a restricted class of cases: To this end, assume that we have a cascade of two projections: one on the first two and then one on the first coordinate, i.e., $\gvec{:}\ [X_1,X_2,X_3]\to[X_1,X_2]$ and $h{:}\ [X_1,X_2]\to X_1$. Let the RVs $X_1$, $X_2$, and $X_3$ have a continuous joint distribution. Then, according to Theorem~\ref{thm:functDimRed}, the function $\gvec$ destroys one third of the information, while $h$ destroys half of the remaining information. In total, two thirds of the information are lost. Indeed, we obtain
\begin{equation}
 \relLoss{[X_1,X_2,X_3]\to X_1} = \frac{1}{3}+\frac{1}{2}-\frac{1}{3}\frac{1}{2}=\frac{2}{3}.
\end{equation}
We are thus lead to the following
\begin{con}\label{con:cascade}
  Let $\gvec$ be a (measurable) function describing a system with input $\Xvec$ and output $\Yvec$. Let further $\hvec$ be another such system function which transforms input $\Yvec$ to output $\Zvec$. There exist conditions under which
\begin{equation}
 \relLoss{\Xvec\to\Zvec} = \relLoss{\Xvec\to\Yvec}+\relLoss{\Yvec\to\Zvec} - \relLoss{\Xvec\to\Yvec}\relLoss{\Yvec\to\Zvec}.\label{eq:cascade}
\end{equation}
\end{con}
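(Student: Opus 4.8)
The plan is to reduce the cascade identity~\eqref{eq:cascade} to a multiplicative chain rule for information dimension. Writing $\relTrans{\cdot}=1-\relLoss{\cdot}$ for the fraction of input information that \emph{survives} a system, the identity~\eqref{eq:cascade} is algebraically equivalent to $\relTrans{\Xvec\to\Zvec}=\relTrans{\Xvec\to\Yvec}\cdot\relTrans{\Yvec\to\Zvec}$, so it suffices to exhibit conditions under which this product formula holds. The natural source of such conditions is Corollary~\ref{cor:dim}: whenever its hypotheses are met for a given map, the surviving fraction is precisely the ratio of output to input information dimension, and ratios telescope.

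Concretely, I would impose the following conditions: $P_\Xvec\ll\mu^N$ and $\gvec$ is as in Theorem~\ref{thm:functDimRed}; the restriction of $\hvec$ to each output piece $\dom{Y}_i$ is again a piecewise submersion onto disjoint smooth manifolds; and the composition $\hvec\circ\gvec$ admits a partition of $\dom{X}$ on which it, too, is a piecewise submersion onto disjoint smooth manifolds. Under these conditions Corollary~\ref{cor:dim} (suitably extended, see below) applied to $\gvec$, to $\hvec$, and to $\hvec\circ\gvec$ gives
\begin{equation}
\relTrans{\Xvec\to\Yvec}=\frac{\infodim{\Yvec}}{\infodim{\Xvec}},\qquad
\relTrans{\Yvec\to\Zvec}=\frac{\infodim{\Zvec}}{\infodim{\Yvec}},\qquad
\relTrans{\Xvec\to\Zvec}=\frac{\infodim{\Zvec}}{\infodim{\Xvec}},
\end{equation}
and the product of the first two equals the third whenever $\infodim{\Yvec}$ is positive, which holds unless every piece of $\gvec$ is $0$-dimensional; in that degenerate case $\relTrans{\Xvec\to\Yvec}=0$ and $\relTrans{\Xvec\to\Zvec}=0$ as well, so~\eqref{eq:cascade} holds trivially. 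Together with $\infodim{\Xvec}=N>0$ this closes the algebra.

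The hard part is twofold. First, Corollary~\ref{cor:dim} as stated assumes $P_\Xvec\ll\mu^N$, but after the first stage $\Yvec$ is in general \emph{not} absolutely continuous with respect to a single Lebesgue measure: it is the finite mixture $\sum_i P_\Xvec(\dom{X}_i)\,P_{\Yvec|\Xvec\in\dom{X}_i}$ with $P_{\Yvec|\Xvec\in\dom{X}_i}\ll\mu^{n_i}$ on the $n_i$-dimensional manifold $\dom{Y}_i$. One therefore first has to extend Theorem~\ref{thm:functDimRed} and Corollary~\ref{cor:dim} from absolutely continuous inputs to such ``rectifiable-mixture'' inputs; this should follow by conditioning on the mixture component, invoking the existing results on each component (where the input is absolutely continuous w.r.t.\ $\mu^{n_i}$), and reassembling via the mixture decomposition of information dimension~\eqref{eq:avgDim} (i.e.,~\cite{Wu_Renyi,Smieja_EntropyMixture}) — but one must check that the preimage/null-set argument and the dominated-convergence step in the proof of Theorem~\ref{thm:RILDim} survive this reduction on each manifold piece. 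Second, and this is the genuine obstacle, one must pin down a meaningful class of pairs $(\gvec,\hvec)$ for which all three applications of the corollary are simultaneously legitimate: composing submersions can drop rank on the common refinement of the two partitions, and images that were disjoint for $\gvec$ need not remain disjoint under $\hvec$, so the partition of $\dom{X}$ witnessing that $\hvec\circ\gvec$ is piecewise submersive onto disjoint manifolds has to be built by hand, with rank bookkeeping along the pieces. Once this geometric step is settled, the information-theoretic content is nothing more than the telescoping of dimension ratios above.
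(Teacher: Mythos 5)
Your reduction of \eqref{eq:cascade} to multiplicativity of the relative transfer, $\relTrans{\Xvec\to\Zvec}=\relTrans{\Xvec\to\Yvec}\,\relTrans{\Yvec\to\Zvec}$, is exactly the algebraic move the paper also makes, but your route for realizing each factor --- as a ratio of information dimensions via Corollary~\ref{cor:dim}, then telescoping --- is genuinely different from the proof the paper actually gives: the appendix establishes the conjecture for \emph{discrete} RVs with finite entropy, where $\relTrans{\Xvec\to\Zvec}=\mutinf{\Xvec;\Zvec}/\ent{\Xvec}=\ent{\Zvec}/\ent{\Xvec}$ because $\Zvec$ is a deterministic function of $\Xvec$, and the product formula is immediate from $\ent{\Zvec}/\ent{\Xvec}=(\ent{\Zvec}/\ent{\Yvec})(\ent{\Yvec}/\ent{\Xvec})$; the continuous projection cascade in Section~\ref{sec:dimredFunct} is only verified by direct computation with Theorem~\ref{thm:functDimRed}. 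As written, however, your argument has a real gap, and you name it yourself: the middle factor. Under your proposed conditions $\Yvec$ is a mixture of distributions carried by manifolds $\dom{Y}_i$ of different dimensions $n_i$, so neither Theorem~\ref{thm:functDimRed} nor Corollary~\ref{cor:dim} applies to the stage $\Yvec\to\Zvec$, and the identity $\relTrans{\Yvec\to\Zvec}=\infodim{\Zvec}/\infodim{\Yvec}$ rests on a ``suitably extended'' corollary that is asserted rather than proved; likewise the partition making $\hvec\circ\gvec$ piecewise submersive with disjoint images is postulated, not constructed. Since the conjecture asks you to exhibit conditions under which \eqref{eq:cascade} provably holds, a proposal whose central lemma is left as ``the hard part'' has not yet exhibited them.

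The repair is cheap precisely because the claim is existential: shrink the conditions until every application of the paper's results is literal. For instance, take $P_\Xvec\ll\mu^N$, let $\gvec$ be a single submersion of $\dom{X}$ onto an open set $\dom{Y}\subseteq\mathbb{R}^m$ and $\hvec$ a submersion of $\dom{Y}$ onto an open set in $\mathbb{R}^k$, $k\le m\le N$ (coordinate projections, as in the paper's example, are of this type). Then $P_\Yvec\ll\mu^m$ because submersions pull null sets back to null sets (the same fact used in the proof of Corollary~\ref{cor:dim}), $\hvec\circ\gvec$ is again a submersion, and Theorem~\ref{thm:functDimRed} applies verbatim three times, giving $\relLoss{\Xvec\to\Yvec}=\frac{N-m}{N}$, $\relLoss{\Yvec\to\Zvec}=\frac{m-k}{m}$, $\relLoss{\Xvec\to\Zvec}=\frac{N-k}{N}$; the identity $\frac{N-k}{N}=\frac{N-m}{N}+\frac{m-k}{m}-\frac{(N-m)(m-k)}{Nm}$ then closes the proof with no extension of Corollary~\ref{cor:dim} needed. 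If you want the general piecewise/mixture version you sketch, the missing ingredient is a mixture analogue of Theorem~\ref{thm:RILDim}/Corollary~\ref{cor:dim} for inputs supported on finitely many smooth manifolds of differing dimension --- plausible via \eqref{eq:avgDim} and conditioning on the piece, and consistent with the paper's own informal use of R\'enyi's result on manifolds, but it is content you must supply, not a citation you can make; your degenerate-case handling ($\infodim{\Yvec}=0$) is fine once that lemma is in place.
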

Clearly, our example of the cascade of two projections fulfills these conditions. Moreover, as we show in\ifthenelse{\arxiv=1}{ the Appendix}{~\cite{Geiger_RILPCA_arXiv}}, this conjecture holds for discrete RVs with finite entropy. 


Finally, it is worth mentioning that the shape of the distribution has no effect on the relative amount of information lost. It is essentially this behavior which leads to the somewhat counter-intuitive results presented in the following sections.

\section{Toy Examples for Dimensionality Reduction}\label{ssec:Toy}
With the help of a few simple examples we now try to make the operational meaning of relative information loss intuitive. At the same time we highlight its importance to the development of an information-centered system theory. 

Let us introduce a two-dimensional RV $\Xvec$, $P_\Xvec\ll\mu^2$ and $\infodim{\Xvec}=2$, and a transform simply adding the two vector components, i.e.,
\begin{equation}
 Y=g(\Xvec)=X_1+X_2.
\end{equation}
We can represent this transform by a cascade of an invertible linear transform (e.g., $\underline{\mathbf{T}}{:}\ \Xvec\to[X_1+X_2,X_1]$) and a simple projection onto the first component. Let now $\tilde{\Xvec}=[X_1+X_2,X_1]$. Since $\underline{\mathbf{T}}$ is invertible and linear, it is bi-Lipschitz and thus preserves the dimension of the transformed RVs. Thus,
\begin{equation}
 \relLoss{\Xvec\to Y} = \relLoss{\tilde{\Xvec}\to Y}=0.5
\end{equation} 
by Theorem~\ref{thm:functDimRed}. Note that this is another example where Conjecture~\ref{con:cascade} holds.

With this result and the underlying theorems, the theory of information-processing systems can be extended from treating cascade structures as in~\eqref{eq:cascade} to also considering parallel structures whose outputs are added. This were not possible by considering only absolute information loss (e.g.,~\cite{Geiger_ILStatic_IZS}), as it would be infinite in these cases.

\begin{figure}[t!]
  \begin{center}
 \begin{pspicture}[showgrid=false](-2,-2)(2,2)
	\psaxeslabels{->}(0,0)(-2,-2)(2,2){$x$}{$g(x)$}
  \psplot[style=Graph,linecolor=black,plotpoints=500]{-1.8}{-0.85}{x}
	\psplot[style=Graph,linecolor=black,plotpoints=500]{0.85}{1.8}{x}
	\psplot[style=Graph,linecolor=black,plotpoints=500]{-0.8}{.8}{0}
	\psdisk[fillcolor=black](0.8,0){0.07}\psdisk[fillcolor=black](-0.8,0){0.07}
	\pscircle(0.8,0.8){0.07}\pscircle(-0.8,-0.8){0.07}
	\psTick{90}(0.8,0) \rput(0.8,-0.3){$c$}
	\psTick{90}(-0.8,0) \rput(-0.8,-0.3){$-c$}
 \end{pspicture}
\end{center}
\caption{The center clipper -- another example for dimensionality reduction.}
\label{fig:cclipper}
\end{figure}
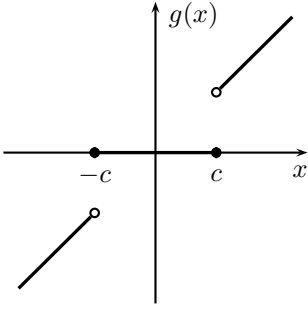

As a second toy example we consider a center clipper (see Fig.~\ref{fig:cclipper}), which is commonly used for noise suppression or residual echo cancellation\ifthenelse{\arxiv=1}{~\cite{Vary_DigSpeechTransm}}{}. We describe the center clipper by the following function:
\begin{equation}
 g(x) = \begin{cases}
 x,& \text{ if }|x|>c\\ 0, & \text{ else}
        \end{cases}.
\end{equation}
Clearly, the domain of this function can be partitioned into three elements, upon two of them the function is the identity function. On the third set, $[-c,c]$, the function is a submersion to a zero-dimensional manifold. We can thus apply Theorem~\ref{thm:functDimRed} and obtain $\relLoss{X\to Y}=P_X([-c,c])$.

It is interesting to see that in both examples the distribution of the input signal does not have an influence on the relative information loss, as long as it is continuous. This is counter-intuitive in the sense that adding two strongly correlated RVs should preserve more information than adding two independent RVs, or, in the sense that center clipping a large signal should not hurt too much. This intuition, however, is based on the fact that one tends to attribute unequal importance to different aspects of the information contained in the input signal (e.g., principle direction, magnitude, etc.).

\section{PCA with Population Covariance Matrix}\label{ssec:PCA}
In PCA one uses the eigenvalue decomposition (EVD) of the covariance matrix of a multivariate input to obtain a different representation of the input vector. In particular, let $\Xvec$ be an RV with distribution $P_\Xvec\ll\mu^N$ and information dimension $\infodim{\Xvec}=N$. We further assume that $\Xvec$ has zero mean and a positive definite population covariance matrix $\Cx=\expec{\Xvec\Xvec^T}$ which is known a priori. The case where $\Cx$ is not known but has to be estimated from the data is considered in Section~\ref{sec:pcaloss}.

The EVD of the covariance matrix yields
\begin{equation}
 \Cx = \underline{\mathbf{W}}\underline{\boldsymbol{\Sigma}}\underline{\mathbf{W}}^T
\end{equation}
where $\underline{\mathbf{W}}$ is an orthogonal matrix (i.e., $\underline{\mathbf{W}}^{-1}=\underline{\mathbf{W}}^T$) and $\underline{\boldsymbol{\Sigma}}$ is a diagonal matrix consisting of the $N$ eigenvalues of $\Cx$.
%
We now can describe the PCA by the following linear transform:
\begin{equation}
 \Yvec=\gvec(\Xvec)=\underline{\mathbf{W}}^T\Xvec.\label{eq:PCA}
\end{equation}
As in Section~\ref{ssec:Toy}, the linear transform is bi-Lipschitz, and the information loss vanishes\footnote{Not only the relative information loss {$\relLoss{\Xvec\to\Yvec}=0$}, but also the absolute information loss {$\loss{\Xvec\to\Yvec}=0$}.}.

Often, however, the PCA is used for dimensionality reduction, where after the linear transform in~\eqref{eq:PCA} the elements of the random vector $\Yvec$ with the smallest variances are discarded (thus, preserving the subspace with the largest variance).  Essentially, the mapping from $\Yvec$ to, e.g.,
\begin{equation}
 \Yvec_M = [Y_1,\dots,Y_M]
\end{equation}
is a projection onto the first $M<N$ coordinates, which is a submersion between two smooth manifolds. We can thus apply Theorem~\ref{thm:functDimRed} and obtain the relative information loss $\relLoss{\Yvec\to\Yvec_M}=\frac{N-M}{N}$. In analogy with the example in Section~\ref{ssec:Toy} we thus get
\begin{equation}
 \relLoss{\Xvec\to\Yvec_M}=\frac{N-M}{N}.
\end{equation}

We can now extend this analysis to the case where from $\Yvec_M$ an ($N$-dimensional) estimate $\tilde{\Xvec}_M$ of the original data $\Xvec$ is reconstructed. This estimate is obtained using a linear transform
\begin{equation}
 \tilde{\Xvec}_M = \underline{\mathbf{W}}\underline{\mathbf{I}}_M^T \Yvec_M
\end{equation}
where $\underline{\mathbf{I}}_M$ is a rectangular identity matrix with $M$ rows and $N$ columns. The (full-rank) matrix $\underline{\mathbf{I}}_M^T$ is a mapping to a higher-dimensional space (i.e., from $\mathbb{R}^M$ to $\mathbb{R}^N$) and is thus bi-Lipschitz; so is the rotation with the matrix $\underline{\mathbf{W}}$. Furthermore, the transform from $\Yvec_M$ to $\tilde{\Xvec}_M$ is invertible and, as a consequence, no additional information is lost. We thus state
\begin{equation}
 \relLoss{\Xvec\to\tilde{\Xvec}_M} = \frac{N-M}{N}
\end{equation}
where, using above notation, $\tilde{\Xvec}_M = \underline{\mathbf{W}}\underline{\mathbf{I}}_M^T \underline{\mathbf{I}}_M\underline{\mathbf{W}}^T\Xvec$.

Indeed, the same result would have been obtained if the rotation would have been performed using any other orthogonal matrix and regardless which elements of the rotated vector were discarded. In particular, also if just the first $M$ components of $\Xvec$ would have been preserved, we would have $\relLoss{\Xvec\to\Xvec_M}=\frac{N-M}{N}$.

Of course, PCA is known to be optimal in the sense that, by discarding the elements of $\Yvec$ with the smallest variances, the mean-squared reconstruction error for $\tilde{\Xvec}_M$ is minimized~\cite{Deco_ITNN}. For this interpretation, measuring the information loss \emph{with respect to a relevant random variable} may do the trick, providing us with a statement about the optimality of PCA in an information-theoretic sense (cf.~\cite{Plumbley_TN,Deco_ITNN}). Conversely, if one cannot determine which information at the input is relevant, one has no reason to perform PCA prior to reducing the dimension of the data.

\section{PCA with Sample Covariance Matrix}\label{sec:pcaloss}
We argued in Section~\ref{ssec:PCA} that the PCA without dimensionality reduction is an invertible transform. This, however, is only true if one has access to the orthogonal matrix $\underline{\mathbf{W}}$; if not, i.e., if one feeds a system with the data matrix $\Xmat$ and just receives the rotated matrix $\Ymat$ (see Fig.~\ref{fig:PCA}), it can be shown that information is lost. In this section we make this statement precise.

We believe that the following analysis can be generalized to all data matrices with a continuous joint distribution, i.e., $P_{\Xmat}\ll\mu^{nN}$. For the sake of simplicity, in this paper, we will focus on a particularly simple scenario: Let $\Xmat$ denote a matrix where each of its $n$ columns represents an \emph{independent} sample of an $N$-dimensional Gaussian RV $\Xvec$. Again, let $\Xvec$ have zero mean and positive definite population covariance matrix $\Cx$. As a consequence, the probability distribution of the data matrix $\Xmat$ is absolutely continuous w.r.t. the $nN$-dimensional Lebesgue measure ($P_{\Xmat}\ll\mu^{nN}$).

\newcommand{\Wh}{\underline{\hat{\mathbf{W}}}}
\newcommand{\Sh}{\underline{\hat{\boldsymbol{\Sigma}}}}
\begin{figure}[t]
 \centering  
\begin{pspicture}[showgrid=false](1.25,-0.75)(8.75,3.5)
	\psset{style=RoundCorners,style=Arrow}
 	\pssignal(1.25,2){x}{$\Xmat$}
	\pssignal(7.75,2){sigma}{$\Sh$}
	\pssignal(8.75,0){y}{$\Ymat$}
	\dotnode(2.5,2){dotx}
	\psfblock[framesize=1.2 .8](3.5,2){cov}{Cov}
	\psfblock[framesize=1.2 .8](6.25,2){evd}{EVD}
	\psfblock[framesize=1.5 1](5,0){w}{$\Wh$}

	\nclist{ncline}[naput]{x,dotx,cov,evd $\frac{1}{n}\Xmat\Xmat^T$,sigma}

	\ncline{w}{y}
	\ncangle[angleA=-90,angleB=-180]{dotx}{w}
	\ncangle[angleA=-90,angleB=90]{evd}{w}\nbput{$\Wh$}

	\fnode[style=Dash,linecolor=gray,framesize=6 3.5](5,1){box}
	\nput{90}{box}{PCA}
\end{pspicture}
\caption{The PCA as a nonlinear input-output system. ``Cov'' denotes the computation of the sample covariance matrix and ``EVD'' stands for eigenvalue decomposition.}
\label{fig:PCA}
\vspace*{-0.5cm}
\end{figure}
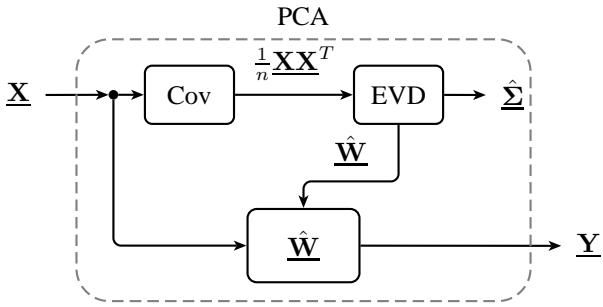

The sample covariance matrix $\Chx=\frac{1}{n}\Xmat\Xmat^T$ is symmetric and almost surely positive definite. In the usual case where $n\geq N$ one can show that $\frac{N(N+1)}{2}$ entries can be chosen and that the remaining entries depend on these in a deterministic manner. Indeed, since in this case the distribution of $\Chx$ possesses a density (the Wishart distribution, cf.~\cite{Muirhead_Multivariate}), the distribution is absolutely continuous w.r.t. the Lebesgue measure on an $\frac{N(N+1)}{2}$-dimensional submanifold of the $N^2$-dimensional Euclidean space. With some abuse of notation we thus write $P_{\Chx}\ll\mu^{\frac{N(N+1)}{2}}$.

The orthogonal matrix $\Wh$ for PCA (see~\eqref{eq:PCA}; now applied to the matrix $\Xmat$ instead of the vector $\Xvec$) is obtained from the EVD of the sample covariance matrix, i.e.,
\begin{equation}
 \Chx=\Wh\Sh\Wh^T
\end{equation}
where $\Sh$ is the diagonal matrix containing the eigenvalues of $\Chx$. The joint distribution of the $N$ eigenvalues of $\Chx$ possesses a density~\cite[Ch.~9.4]{Muirhead_Multivariate}; thus, the distribution of $\Sh$ is absolutely continuous w.r.t. the Lebesgue measure on an $N$-dimensional submanifold of the $N^2$-dimensional Euclidean space, or $P_{\Sh}\ll\mu^N$. 

Clearly, the entries of $\Chx$ are smooth functions of the eigenvalues and the entries of $\Wh$. Images of Lebesgue null-sets under smooth functions between Euclidean spaces of same dimension are null-sets themselves; were the probability measure $P_{\Wh}$ supported on some set of dimensionality lower than $\frac{N(N-1)}{2}$, the image of the product of this set and $\mathbb{R}^N$ (for the eigenvalues) would be a Lebesgue null-set with positive probability measure. Since this contradicts the fact that $\Chx$ is continuously distributed, it follows that 
\begin{equation}
 P_{\Wh}\ll\mu^{\frac{N(N-1)}{2}}.
\end{equation}
We now argue\ifthenelse{\arxiv=1}{ (see Appendix for a rigorous discussion)}{ (in~\cite{Geiger_RILPCA_arXiv} more rigorously than here)} that the rotated data does not tell us anything about the rotation, hence
\begin{equation}
 P_{\Wh|\Ymat=\underline{\yvec}}\ll\mu^{\frac{N(N-1)}{2}}.
\end{equation}

Knowing $\Ymat=\ymat$, $\Xmat$ is a linear, bi-Lipschitz function of $\Wh$, and thus the information dimension remains unchanged. Therefore we get with Theorem~\ref{thm:RILDim}
\begin{equation}
 \relLoss{\Xmat\to\Ymat} = \frac{\infodim{\Xmat|\Ymat}}{\infodim{\Xmat}} = \frac{N(N-1)}{2nN}=\frac{N-1}{2n}.
\end{equation}

%

We now drop a little of the mathematical rigor to analyze, for the sake of completeness, the less common case where there are less data samples than there are dimensions for each sample ($n<N$). In this case, the sample covariance matrix is not full rank, which means that the EVD yields $N-n$ vanishing eigenvalues. Assuming that still $P_{\Chx}\ll\mu^{n\frac{2N-n+1}{2}}$ one finds along the same lines as in the case $n\geq N$ that the loss evaluates to
\begin{equation}
 \relLoss{\Xmat\to\Ymat} = \frac{2N-n-1}{2N}.
\end{equation}
The behavior of the relative information loss as a function of $n$ is shown in Fig.~\ref{fig:PCAloss} for different choices of $N$.\ifthenelse{\arxiv=1}{ A worked example for a particularly simple case can be found in the Appendix.}{}

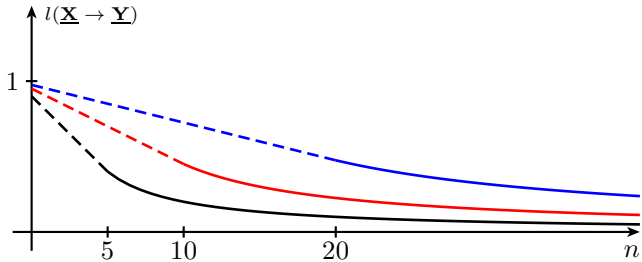
\begin{figure}[t]
 \centering  
\begin{pspicture}[showgrid=false](-0.5,-0.5)(8,3)
	\psset{style=RoundCorners}
	\psaxeslabels[style=Arrow](0,0)(-0.25,-0.25)(8,3){$n$}{\scriptsize $\relLoss{\Xmat\to\Ymat}$}
	\psplot[style=Graph,linewidth=1pt]{1}{8}{x 5 mul 2 div -1 exp 2 mul}
	\psplot[style=Graph,linewidth=1pt,linecolor=red]{2}{8}{x 5 mul 9 div 2 mul -1 exp 2 mul}
	\psplot[style=Graph,linewidth=1pt,linecolor=blue]{4}{8}{x 5 mul 19 div 2 mul -1 exp 2 mul}
	\psset{style=Dash}
	\psplot[style=Graph,linewidth=1pt]{0}{1}{x -5 mul 9 add 10 div 2 mul}
	\psplot[style=Graph,linewidth=1pt,linecolor=red]{0}{2}{x -5 mul 19 add 20 div 2 mul}
	\psplot[style=Graph,linewidth=1pt,linecolor=blue]{0}{4}{x -5 mul 39 add 40 div 2 mul}
	\psTick{0}(0,2) \psTick{90}(1,0)\psTick{90}(2,0)\psTick{90}(4,0)
	\rput(-0.25,2){$1$} \rput(1,-0.25){$5$}\rput(2,-0.25){$10$}\rput(4,-0.25){$20$}
\end{pspicture}
\caption{Relative information loss in the PCA with sample covariance matrix as a function of the number $n$ of independent measurements. The cases $N=5$ (black), $N=10$ (red), and $N=20$ (blue) are shown. The dashed lines indicate the conjectured loss for singular sample covariance matrices.}
\label{fig:PCAloss}\vspace*{-0.5cm}
\end{figure}

The relative information loss induced by PCA results from the fact that one cannot know \emph{which rotation} led to the output data matrix. As a consequence, the relative information loss decreases with a larger number of samples: the total information increases while the uncertainty about the rotation remains the same. Note further that the relative information loss cannot exceed $\frac{N-1}{N}$ (for $n=1$), which is due to the fact that the rotation preserves the norm of the sample. 

\ifthenelse{\arxiv=1}{The PCA with sample covariance matrix also allows different interpretations of information loss, in addition to the information lost in the rotation: First, as we show in the Appendix, by the fact that the sample covariance matrix of $\Ymat$ is a diagonal matrix, the possible values of $\Ymat$ are restricted to a submanifold of dimensionality smaller than $nN$. Naturally, this also restricts the amount of information which can be conveyed in the output. In contrary to this, in PCA using the population covariance matrix the sample covariance matrix of $\Ymat$ will (almost surely) do not contain zeros, so the entries of $\Ymat$ will not be restricted deterministically.}{}

Finally, it is interesting to observe that in this case the \emph{absolute} information loss in the PCA is infinite\ifthenelse{\arxiv=1}{ (see~Proposition~\ref{prop:infLoss})}{~\cite{Geiger_RILPCA_arXiv}}, even if no additional dimensionality reduction is performed. Moreover, this analysis not only holds for the PCA, but for \emph{any} rotation which depends on the input data in a similar manner --  in this sense, the PCA is not better than any other rotation.

\section{Discussion}\label{sec:relevant}
Relative information loss was introduced to cope with the shortcomings of absolute information loss, especially in cases where the information loss $\loss{\Xvec\to\Yvec}$ and the information transfer $\mutinf{\Xvec;\Yvec}$ are infinite. However, as the examples of Section~\ref{ssec:Toy} and~\ref{ssec:PCA} showed, even relative information loss not necessarily provides full insight and intuitive interpretations in some cases.

These cases may be characterized by the fact that not all information at the input of the system is \emph{relevant} for a given application, i.e., we are actually not interested in $\Xvec$ itself but in some random variable $\Zvec$ somehow related to it. $\Zvec$, however, is not accessible directly, but only through a function $\gvec$ of the related RV $\Xvec$. The logical consequence is thus to define a quantity which captures the information loss relevant w.r.t. $\Zvec$. We did so in~\cite{Geiger_Relevant_arXiv}, where we analyzed some of this quantity's properties and made a connection to the signal enhancement problem.
%
%
%
%

A very similar quantity has already been introduced by Plumbley~\cite{Plumbley_TN} in the context of unsupervised learning. Using this quantity he proved the optimality of the PCA under appropriate constraints, and argued that information loss in some cases is more versatile than mutual information.

In order to be able to minimize this \emph{relevant information loss}, one clearly has to know something about the relevant RV $\Zvec$ and its relationship to the system input $\Xvec$. If one does not have this knowledge, all bits of information have to be treated equally, leading to our notion of (relative) information loss. As a direct consequence, while in some cases the PCA prior to dimensionality reduction might be the optimal solution (cf.~\cite{Plumbley_TN,Geiger_Relevant_arXiv}), without knowledge about the relevant information one has no reason to apply it to a data set. Even more so, as we showed in Section~\ref{sec:pcaloss}, even without reducing the dimensionality, information can be lost, which should prevent one from unjustified use of PCA.

\ifthenelse{\arxiv=1}{
\begin{figure*}[t]
 \centering  
\begin{pspicture}[showgrid=false](1.25,-1)(14,5)
  \psset{style=RoundCorners,style=Arrow}
  \pssignal(1.25,3){x}{$\Xmat$}
  \pssignal(14,3){wout}{$\Wh$}
  \pssignal(14,4){sigma}{$\Sh$}
  \pssignal(14,0){y}{$\Ymat$}

  \dotnode(2.5,3){dotx}
  \psfblock[framesize=1.2 .8](3.5,3){cov}{Cov}
  \psfblock[framesize=1.2 .8](6.25,3){evd}{EVD}
  \psfblock[framesize=1.5 1](6.25,0){w}{$\Wh$}
  \dotnode(8,3){dotw}
  \dotnode(11,4){dots}
  \dotnode(11,0){doty}
 
  \ncline{x}{cov}\naput{1}
  \ncline{cov}{evd}\nbput{$\frac{N+1}{2n}$}\naput{$\Chx$}
  \ncline{evd}{wout}\nbput{$\frac{N-1}{2n}$}
  \ncline{w}{y}\nbput{$\frac{2n-N+1}{2n}=1-\frac{N-1}{2n}$}
  \ncangle[angleA=-90,angleB=-180]{dotx}{w}\naput{1}
  \ncangle[angleA=90,angleB=180]{evd}{sigma}\naput{$\frac{1}{n}$}
  \ncangle[angleA=-90,angleB=90]{dotw}{w}\nbput{$\frac{N-1}{2n}$}
  \psset{linecolor=gray}
  \pssignal(14,1.5){yt}{\textcolor{gray}{$\tilde{\Ymat}$}}
  \psfblock[framesize=1.5 1](11,1.5){sigmult}{\textcolor{gray}{$\Sh^{-1/2}$}}
  \ncline{sigmult}{yt}\naput{\textcolor{gray}{$\frac{2n-N-1}{2n}$}}
  \ncline{doty}{sigmult}
  \ncline{dots}{sigmult}
\end{pspicture}
\caption{Information propagation in PCA with sample covariance matrix. The values on the arrows indicate the relative information transfer $t(\Xmat\to\cdot)$. We assume $n>N$ in this case. The gray part considers the sphering transform (see text). Note that the separate information transfers to $\Sh$, $\Wh$, and $\Ymat$ add up to $1+\frac{1}{n}>1$. This is because the information in $\Sh$ is already contained in $\Ymat$. Note further that $t(\Xmat\to\Ymat,\Sh,\Wh)=1$, as expected.}
\label{fig:PCAbudget}
\end{figure*}
We now discuss a different aspect of our work about information loss -- be it absolute or relative -- in deterministic systems. The definition of these quantities allows us to quantify and, hopefully, understand the propagation of information in a network of systems. 

Take, for example, the PCA: The information at the input is split and propagates through the system, with parts of it being lost in some paths and preserved in others. By denoting $t(\Xmat\to\cdot)=1-\relLoss{\Xmat\to\cdot}$ the relative information transfer, we can thus redraw the system model and obtain Fig.~\ref{fig:PCAbudget} where the arrows are labeled according to the relative information transfer along them. In particular, the eigenvalue decomposition splits the information contained in the covariance matrix into a part describing the eigenvectors and a part describing the eigenvalues. The former part is lost in PCA, while the second is preserved in the output. However, as we already mentioned before, the rotation (i.e., the multiplication with the orthogonal matrix) removes exactly as much information from the input as is contained in the orthogonal matrix. The information contained in both $\Ymat$ and $\Wh$ suffices to reconstruct $\Xmat$. If we perform in addition a sphering transform on $\Ymat$ (thus making all eigenvalues unity), one needs the triple $\tilde{\Ymat}$, $\Wh$ and $\Sh$ to reconstruct $\Xmat$.

It is obvious that this transfer graph must not be understood in the sense of a preservation theorem, similar to Kirchhoff's current law: Information can be split and fused, and after splitting the sum of information at the output needs not equal the sum of information at the input (as it is, by coincidence, for the EVD). If the output information is less than the input information, the intermediate system destroyed the remaining information (as in rotation). Conversely, the if the sum of output information exceeds the input information, this only means that parts of these information must be the same (as, e.g., $\Ymat$ includes the knowledge about $\Sh$).
}{}

\section{Conclusion}
We have introduced the notion of relative information loss to analyze systems for which the output has a lower dimension than the input, exploiting a tight connection to R\'{e}nyi's information dimension. As a first result, we showed that the relative information loss for dimensionality reduction is not affected by performing a principle component analysis (PCA) beforehand.

We then showed that even without dimensionality reduction the PCA is not information lossless, given that the sample covariance matrix is used to compute the rotation matrix. There, the relative information loss appears to decrease with increasing sample size.

We proposed to use these somewhat counter-intuitive results as motivation to introduce a notion of information loss which takes the relevance of the data into account, and thus connects to results available in the literature. A detailed analysis of this relevant information loss is within the scope of future work.

\section*{Acknowledgment}
The authors thank Yihong Wu, Wharton School, University of Pennsylvania, for bringing R\'{e}nyi's information dimension to our attention.

\ifthenelse{\arxiv=1}{
\appendices

\section{Example for infinite absolute and vanishing relative information loss}
We now give an example where an infinite amount of information is lost (i.e., $\loss{X\to Y}=\infty$), but for which the relative information loss nevertheless vanishes (i.e., $\relLoss{X\to Y}=0$). To this end, assume that we consider the following scalar function $g{:}\ (0,1]\to(0,1]$
\begin{equation}
 g(x) = 2^n (x-2^{-n}) \text{ if } x\in(2^{-n},2^{-n+1}],\ n\in\mathbb{N}
\end{equation}
In other words, this function maps every interval $(2^{-n},2^{-n+1}]$ onto the interval $(0,1]$. Assume now further that the input variable $X$ has a continuous distribution with density function
\begin{multline}
 f_X(x) = 2^n \left(\frac{1}{\log(n+1)}-\frac{1}{\log(n+2)}\right),\\\text{ if } x\in(2^{-n},2^{-n+1}],\ n\in\mathbb{N}
\end{multline}
where $\log$ denotes the binary logarithm. As an immediate consequence, the output RV $Y$ is uniformly distributed on $(0,1]$. Furthermore, the information dimension of the input is given by $\infodim{X}=1$.

Since the function is piecewise strictly monotone we can apply the reasoning of~\cite{Geiger_ILStatic_IZS} and claim that
\begin{equation}
\loss{X\to Y}=\ent{X|Y}=\ent{W|Y}
\end{equation}
where
\begin{equation}
 W= n \text{ if } X\in(2^{-n},2^{-n+1}].
\end{equation}
In particular, for the given input density function we obtain
\begin{multline}
 \Prob{W=n|Y=y}= \Prob{W=n}\\=\frac{1}{\log(n+1)}-\frac{1}{\log(n+2)}.
\end{multline}
For this distribution, however, it is known that the entropy is infinite, i.e., $\ent{W}=\infty$~\cite{Baer_InfEntropy}. This shows that in this case the absolute information loss is infinite as well.

However, for every $y\in(0,1]$ the preimage is a countable set, thus $X|Y=y$ is a discrete RV. In addition to that, since $X$ is supported on a compact set, so is $X|Y=y$, and every quantization $\hat{X}_n|Y=y$ will have finite entropy. Thus, $\infodim{X|Y=y}=0$ for all $y$~\cite{Renyi_InfoDim}, and with Theorem~\ref{thm:RILDim} we obtain $\relLoss{X\to Y}=0$.

\section{Proof of Conjecture~\ref{con:cascade} for discrete RVs}
Let $\Xvec$, $\Yvec$, and $\Zvec$ be discrete RVs with finite entropy, and let further $\Yvec=\gvec(\Xvec)$ and $\Zvec=\hvec(\Yvec)$. As an immediate result, $\mutinf{\Xvec;\Yvec}=\ent{\Yvec}$ and $\mutinf{\Yvec;\Zvec}=\ent{\Zvec}$. We next note that
\begin{equation}
 \relLoss{\Xvec\to\Zvec} = \frac{\ent{\Xvec|\Zvec}}{\ent{\Xvec}}
\end{equation}
since the RVs are already discrete. Furthermore, we can write
\begin{equation}
 \relLoss{\Xvec\to\Zvec} = \frac{\ent{\Xvec}-\mutinf{\Xvec;\Zvec}}{\ent{\Xvec}} =1-t(\Xvec\to\Zvec)
\end{equation}
where $t(\Xvec\to\Zvec)=\frac{\mutinf{\Xvec;\Zvec}}{\ent{\Xvec}}=\frac{\ent{\Zvec}}{\ent{\Xvec}}$, since also $\Zvec$ is a function of $\Xvec$. Expanding the latter term -- the \emph{relative information transfer} -- one obtains
\begin{multline}
 1-\relLoss{\Xvec\to\Zvec} = t(\Xvec\to\Zvec) = \frac{\ent{\Zvec}}{\ent{\Xvec}}\frac{\ent{\Yvec}}{\ent{\Yvec}} \\= \frac{\mutinf{\Yvec;\Zvec}}{\ent{\Yvec}}\frac{\mutinf{\Xvec;\Yvec}}{\ent{\Xvec}} = t(\Xvec\to\Yvec)t(\Yvec\to\Zvec)\\
=(1-\relLoss{\Xvec\to\Yvec})(1-\relLoss{\Yvec\to\Zvec})\\
= 1-\relLoss{\Xvec\to\Yvec}-\relLoss{\Yvec\to\Zvec}+\relLoss{\Xvec\to\Yvec}\relLoss{\Yvec\to\Zvec}.
\end{multline}
Rearranging completes the proof.\endproof
We note in passing that this result holds for the cascade of \emph{deterministic} systems and that a generalization to Markov chains $\Xvec-\Yvec-\Zvec$ is not possible.

\section{Information dimension of $\Xmat|\Ymat=\underline{\yvec}$}
We already observed that $P_{\Wh}\ll\mu^{\frac{N(N-1)}{2}}$. Furthermore, since we know that the sample covariance matrix of $\Ymat$ is a diagonal matrix, the corresponding equations
\begin{equation}
 \forall1\leq i< j\leq N:\quad (\Cy)_{ij} = \frac{1}{n}\sum_{k=1}^n Y_{ik}Y_{jk}=0
\end{equation}
restrict the possible values of $\Ymat$ from an $nN$-dimensional to an $M$-dimensional subspace with
\begin{equation}
 M=nN-\frac{N(N-1)}{2}.
\end{equation}
In fact, it can be shown that $M$ elements of $\Ymat$ are random while the remaining $(nN-M)$ depend on these in a deterministic manner\footnote{E.g., one could determine $Y_{ij}$ from the equation for $(\Cy)_{ij}$.}. 

Since $\Xmat=\Wh\Ymat$, $\Xmat$ is a smooth function from $\mathbb{R}^{\frac{N(N-1)}{2}}\times\mathbb{R}^M$ (the ranges of the values of $\Wh$ and $\Ymat$) to $\mathbb{R}^{nN}$ (the range of values of $\Xmat$), thus from $\mathbb{R}^{nN}$ to $\mathbb{R}^{nN}$. Since this smooth mapping maps null-sets to null-sets~\cite[Lem.~6.2]{Lee_SmoothManifolds}, we obtain
\begin{equation}
 P_{(\Wh,\Ymat)}\ll\mu^{nN}.
\end{equation}
(We are well aware that $\Wh$ and $\Ymat$ together have more than $nN$ entries, but only $nN$ of those can be chosen freely. In other words, the graph of the functions defining the remaining entries of $\Ymat$ and $\Wh$ is an $nN$-dimensional submanifold of $\mathbb{R}^{N(n+N)}$~\cite[Lem.~5.9]{Lee_SmoothManifolds}.)
The joint distribution of $(\Wh,\Ymat)$ thus possesses a density, and by marginalizing and conditioning so does $\Wh|\Ymat=\ymat$. As a consequence,
\begin{equation}
 P_{\Wh|\Ymat=\ymat}\ll\mu^{\frac{N(N-1)}{2}}.
\end{equation}
Note further that this does not mean that $\Wh$ is independent of $\Ymat$ -- it just means that these two quantities are at least not related deterministically.

The final step is taken by recognizing that if one knows $\Ymat=\ymat$, then $\Xmat|\Ymat=\ymat$ is a linear function of $\Wh|\Ymat=\ymat$. Since $\Ymat$ has full rank, the linear function maps the $N^2$-dimensional space of $\Wh$ (on which the probability mass is concentrated on an $\frac{N(N-1)}{2}$-dimensional subspace) to the $N^2$-dimensional linear subspace of $\mathbb{R}^{nN}$. With~\cite[Remark~28.9]{Yeh_RealAnalysis} this transform is bi-Lipschitz and preserves the information dimension. Thus,
\begin{equation}
 \infodim{\Xmat|\Ymat=\ymat}=\frac{N(N-1)}{2}.
\end{equation}

\section{Example: PCA with singular Sample Covariance Matrix}\label{sec:examplePCA}
We now give a worked example for the -- admittedly less common -- case of a singular sample covariance matrix. Let $\Xvec$ be a two-dimensional Gaussian RV, and let $n=1$, i.e., $\Xmat=\Xvec$. The sample covariance matrix is given by
\begin{equation}
 \Chx = \left[\begin{array}{cc}
               X_1^2&X_1X_2\\X_1X_2&X_2^2
              \end{array}
\right]
\end{equation}
and has eigenvalues $|\Xvec|^2$ and 0. The corresponding (normalized) eigenvectors are then given by
\newcommand{\pvec}{\mathbf{p}}
\begin{equation}
 \pvec_1=\left[\frac{\sgn{X_2}X_1}{|\Xvec|}, \frac{|X_2|}{|\Xvec|}\right]^T
\end{equation}
and
\begin{equation}
 \pvec_2=\left[-\frac{\sgn{X_1}X_2}{|\Xvec|}, \frac{|X_1|}{|\Xvec|}\right]^T.
\end{equation}
Performing now the rotation $\Yvec=\Wh^T\Xvec$ with $\Wh=[\pvec_1, \pvec_2]$ one obtains
\begin{equation}
 \Yvec=[\sgn{X_2}|\Xvec|, 0]^T.
\end{equation}
The fact that the second component of $\Yvec$ is zero regardless of the entries of $\Xvec$ makes it obvious that exactly one half of the information is lost, i.e., $\relLoss{\Xvec\to\Yvec}=\frac{1}{2}$. This also corresponds to the result obtained in Section~\ref{sec:pcaloss} for $N=2$ and $n=1$.
}
{}

\vspace{0.5cm}
\bibliographystyle{IEEEtran}
\bibliography{IEEEabrv,/afs/spsc.tugraz.at/project/IT4SP/1_d/Papers/InformationProcessing.bib,%
/afs/spsc.tugraz.at/project/IT4SP/1_d/Papers/ProbabilityPapers.bib,%
/afs/spsc.tugraz.at/user/bgeiger/includes/textbooks.bib,%
/afs/spsc.tugraz.at/user/bgeiger/includes/myOwn.bib,%
/afs/spsc.tugraz.at/user/bgeiger/includes/UWB.bib,%
/afs/spsc.tugraz.at/project/IT4SP/1_d/Papers/InformationWaves.bib,%
/afs/spsc.tugraz.at/project/IT4SP/1_d/Papers/ITBasics.bib,%
/afs/spsc.tugraz.at/project/IT4SP/1_d/Papers/HMMRate.bib,%
/afs/spsc.tugraz.at/project/IT4SP/1_d/Papers/ITAlgos.bib}

\end{document}